\def\lc{\left\lfloor}   
\def\rc{\right\rfloor}
\def\BState{\State\hskip-\ALG@thistlm}
\newcommand\reallywidehat[1]{%
\savestack{\tmpbox}{\stretchto{%
  \scaleto{%
    \scalerel*[\widthof{\ensuremath{#1}}]{\kern-.6pt\bigwedge\kern-.6pt}%
    {\rule[-\textheight/2]{1ex}{\textheight}}%WIDTH-LIMITED BIG WEDGE
  }{\textheight}% 
}{0.5ex}}%
\stackon[1pt]{#1}{\tmpbox}%
}
\newcommand{\full}[0]{\text{permutation}}
\newtheorem*{theorem*}{Theorem}
\newtheorem{theorem}{Theorem}[section]
\newtheorem{lemma}[theorem]{Lemma}
\newtheorem{fact}[theorem]{Fact}
\newtheorem{claim}[theorem]{Claim}
\newtheorem{assumption}[theorem]{Assumption}
\newtheorem{example}[theorem]{Example}
\newtheorem{definition}[theorem]{Definition}
\newtheorem{observation}[theorem]{Observation}
\newcommand{\dmi}{\textsc{DMI}}
\newcommand{\E}{\mathrm{E}}
\newcommand{\var}{\mathrm{Var}}
\newcommand{\cov}{\mathrm{Cov}}
\newcommand{\sign}{\mathrm{sgn}}
\newcommand{\per}{\mathrm{per}}
\begin{document}
% Title portion. Note the short title for running heads
\title{Dominantly Truthful Multi-task Peer Prediction with a Constant Number of Tasks}
\author{Yuqing Kong \\
The Center on Frontiers of Computing Studies,\\
Computer Science Dept., Peking University \\
\texttt{yuqing.kong@pku.edu.cn} \\}
\date{}
\maketitle
\begin{abstract}

In the setting where participants are asked multiple similar \emph{possibly subjective}  multi-choice questions (e.g. Do you like Panda Express? Y/N; do you like Chick-fil-A? Y/N), a series of peer prediction mechanisms are designed to incentivize honest reports and some of them achieve \emph{dominantly truthfulness}: truth-telling is a dominant strategy and strictly dominate other ``non-permutation strategy'' with some mild conditions. However, a major issue hinders the practical usage of those mechanisms: they require the participants to perform an infinite number of tasks. When the participants perform a finite number of tasks, these mechanisms only achieve approximated dominant truthfulness. The existence of a dominantly truthful multi-task peer prediction mechanism that only requires a finite number of tasks remains to be an open question that may have a negative result, even with full prior knowledge. 

This paper answers this open question by proposing a new mechanism, Determinant based Mutual Information Mechanism (\dmi-Mechanism), that is dominantly truthful \emph{when the number of tasks is $\geq 2C$}. $C$ is the number of choices for each question ($C=2$ for binary-choice questions). \dmi-Mechanism also pays truth-telling higher than any strategy profile and strictly higher than uninformative strategy profiles (informed truthfulness). In addition to the truthfulness properties, \dmi-Mechanism is also easy to implement since it does not require any prior knowledge (detail-free) and only requires $\geq 2$ participants. The core of \dmi-Mechanism is a novel information measure, Determinant based Mutual Information (DMI). DMI generalizes Shannon's mutual information and the square of DMI has a simple unbiased estimator. In addition to incentivizing honest reports, \dmi-Mechanism can also be transferred into an information evaluation rule that identifies high-quality information without verification when there are $\geq 3$ participants.

\emph{To the best of our knowledge, \dmi-Mechanism is both the first detail-free informed-truthful mechanism and the first dominantly truthful mechanism that works for a finite number of tasks, not to say a small constant number of tasks.}
\end{abstract}

\section{Introduction}

Eliciting information is common in today's world (e.g. restaurants rating, movie rating) and the elicited information can affect people's life after being fed into algorithms (e.g. recommendation algorithms). However, the overwhelming number of elicitation requests may lead to unrepresentative feedbacks. The naive flat-payment reward, paying every participator one dollar regardless of her feedback, may distort people's incentives: some people may answer a large number of questions simply for the reward without making any attempt to answer accurately. In this case, two questions arise: 

\begin{itemize}
    \item Information Elicitation: how to encourage high-quality, honest information?
    \item Information Evaluation: how to identify high-quality information?
\end{itemize}

These two questions face the same crucial challenge: \emph{the information may not be verifiable, as the information can be subjective or the gold-standard/ground truth is hard to access}. With this challenge, spot-checking, i.e. verifying the answer with some probability, is impossible to implement. The traditional majority vote rule discourages the honest feedback and fails to identify the high-quality information \emph{from the minority}. 

To this end, peer prediction mechanisms arise to address this challenge. The general idea of peer prediction is to reward each person a clever similarity measure between her report and her peer's report such that honest strategy, i.e., truth-telling, is preferred in some solution concepts (e.g. truth-telling is a strict Nash equilibrium which means for each participator, given other people tell the truth, it's strictly better for her to tell the truth as well, even she belongs to the minority). Moreover, every peer prediction mechanism can be transferred into an information evaluation rule that does not need ground truth: scoring each participant's information based on her payment, since intuitively, in expectation, high-quality information should have high payment to be encouraged. 

The multi-task setting, where participants are asked several \emph{similar multi-choice questions} (e.g. Do you like Panda Express? Y/N; do you like Chick-fil-A? Y/N), is one of the common settings for peer prediction mechanisms. Although the multi-task setting asks the participants to perform multiple tasks, it has several advantages over the single-task setting. For example, unlike the single-task mechanisms, the multi-task mechanisms can be \emph{minimal} in the sense that the participants do not need to report their forecasts for other people. Moreover, in the multi-task setting, the single-task setting's \emph{common prior} assumption is unnecessary: the participants can share different prior knowledge for the tasks and be heterogeneous. We want the multi-task mechanisms to be

\begin{itemize}
\item \emph{dominantly truthful}: truth-telling is a dominant strategy and strictly dominates other ``non-permutation strategy''\footnote{Permutation strategy means always reporting a permuted version of the answer, e.g. say ``like'' when the honest answer is ``dislike'' while say ``dislike'' when the honest answer is ``like''. } with some mild conditions;
\item \emph{informed-truthful}: truth-telling is an equilibrium and moreover, truth-telling is the best\footnote{The ``best'' strategy profile has the highest amount of expected payment for every participant.} strategy profile and strictly better than uninformative strategy profiles, where everyone's report is independent of her honest answer;
\item \emph{detail-free}: the implementation of the mechanism does not require any prior knowledge;
\item \emph{practical}: the mechanism works for \emph{a small number of tasks} and a small number of participants. 
\end{itemize}

In the multi-task setting mechanisms, designing truthful mechanisms that works for a small number of participants properties is not hard. However, it's quite difficult to design truthful mechanisms with a small number of tasks and the current state-of-the-art requires participants to perform an infinite number of tasks to achieve informed/dominant truthfulness, which is certainly not practical. In detail, for the informed truthfulness, to reduce the number of tasks, either some prior knowledge is needed (not detail-free) or the truthfulness goal is replaced by its approximated version\footnote{$(\epsilon,\delta)$-informed truthfulness: truth-telling is at least $\epsilon$ better than other strategy profiles, with $1-\delta$ probability.}. For the dominant truthfulness, the current state-of-the-art requires an infinite number of tasks even with full prior knowledge. The mechanisms that work for a finite number of tasks only have an approximated dominant truthfulness\footnote{$(\epsilon,\delta)$-dominant truthfulness: truth-telling is at least $\epsilon$ better than other strategies, with $1-\delta$ probability.}. 

The key issue faced by the previous works is the need to learn the information structures among the participants. For example, when the participants have similar tastes, the mechanism should pay them for agreements. When the participants have opposite tastes, the mechanism should pay them for disagreements. When the participants' tastes have complicated information structures, the mechanism should pay them for ``clever'' (dis)agreements. To learn the information structure, the mechanism needs sufficient amount of tasks. To this end, the existence of a detail-free dominantly/informed-truthful mechanism even with a finite number of tasks remains to be an open question that may have a negative result. This work gives a positive answer to this open question by proposing a new mechanism. Unlike the previous mechanisms, this new mechanism does not have the learning issue since its implementation is independent of the information structure knowledge. 

\paragraph{Main contribution} This work considers both the information elicitation and the information evaluation questions, focuses on the multi-task setting and provides

\begin{itemize}
    \item Information Elicitation: a dominantly truthful, informed-truthful, detail-free and practical multi-task mechanism, Determinant based Mutual Information Mechanism (\dmi-Mechanism), which works for $\geq 2$ participants and $\geq 2C$ tasks (Theorem~\ref{thm:main}); 
    
    \item Information Evaluation: a multi-task information evaluation rule which is based on \dmi-Mechanism. This scoring rule assigns high-quality information a higher score in expectation (Theorem~\ref{thm:main2}). A concentration bound analysis for the scores is also provided (Section~\ref{sec:cen}). 
\end{itemize}

To the best of our knowledge, \dmi-Mechanism is both the first detail-free informed-truthful and the first detail-free dominantly truthful mechanism that works for a finite number of tasks, not to say a constant number of tasks.

\paragraph{Main technical contribution} The core of \dmi-Mechanism is a novel matrix determinant based information measure, DMI (Section~\ref{sec:dmi}). Like Shannon's mutual information (MI), DMI is non-negative, symmetric and information-monotone. DMI also has two additional desirable properties that MI does not have: DMI's unbiased estimator is easy to be constructed due to the polynomial format\footnote{Without the information structure knowledge, the information measures used in previous works \cite{Kong:2019:ITF:3309879.3296670,2016arXiv160303151S,liuchen} do not have a polynomial format. This is the main technical reason that they need an infinite number of tasks.} of the determinant and DMI has a special relative invariance property due to the multiplicative property of the determinant. The first property allows \dmi-Mechanism to be dominantly truthful with a constant number of tasks and the second property allows \dmi-Mechanism based scoring rule to assign a higher score to high-quality information in expectation.

\subsection{Roadmap}

This paper first focuses on the information elicitation. Section~\ref{sec:prelim1} introduces the formal setting for information elicitation without verification in the multi-task setting. Section~\ref{sec:dmi} introduces the definition and the properties of the main technical ingredient, DMI. Section~\ref{sec:dmim} presents the \dmi-Mechanism and the main result for information elicitation (Theorem~\ref{thm:main}). Section~\ref{sec:rule} starts to introduce the problem of information evaluation without verification and then presents the main result for information evaluation (Theorem~\ref{thm:main2}). Section~\ref{sec:con} concludes the paper and discusses possible future directions.

\section{Related work}

Since the seminal work of \citet{MRZ05}, several works focus on designing information elicitation mechanisms without verification, i.e., peer prediction. Two main settings are concerned: single-task setting and multi-task setting. In the single-task setting, participants are asked to answer only a single question. The multi-task setting asks the participants to answer several a priori similar questions. Thus, in some sense, the multi-task setting has a stronger assumption than the single-task setting. However, the multi-task setting's mechanisms are usually minimal, which is possibly more practical than the non-minimal single-task mechanisms. Moreover, the current state-of-the-art mechanism in the single-task setting is only informed-truthful while there exist dominantly truthful mechanisms in the multi-task setting. We will classify the related works into these two settings. This section will first focus on the classic setting where each task is simply a multi-choice question and each participant receives a single private answer for each task. We will talk about other models later.

\paragraph{Single-task} The original peer prediction work \cite{MRZ05} designs the first single-task peer prediction mechanism where truth-telling is a strict equilibrium. However, their mechanism requires the prior knowledge. \citet{prelec2004bayesian} designs the first single-task \emph{detail-free} peer prediction mechanism, Bayesian Truth Serum (BTS). However, BTS requires the number of participants to be infinite and is \emph{non-minimal}, in the sense that participants are additionally asked to forecast their peers' answers (e.g. what percentage of your peers will answer ``yes'' for this question?). A series of works (e.g. \cite{radanovic2014incentives,faltings2014incentives,witkowski2012robust}) focus on removing the infinite-participants assumption of BTS but are still non-minimal. \citet{DBLP:conf/innovations/KongS18} propose a detail-free mechanism which only requires a small number of participants (in fact, $\geq 6$) and pays truth-telling better than any other symmetric equilibrium. For the information evaluation, \citet{prelec2017solution} apply BTS to recover ground truth when the majority can be wrong sometimes. Unlike these works, this paper focuses on the minimal multi-task setting.

\paragraph{Multi-task} Table \ref{table:compare} compares the multi-task peer prediction mechanisms for the number of tasks needed to achieve the (approximated) truthfulness. \citet{dasgupta2013crowdsourced} propose the first multi-task peer prediction mechanism. Their mechanism is informed-truthful and requires only $\geq 2$ questions. However, their mechanism only works for binary-choice questions and assumes that all participants' honest answers are positively correlated, which put a limitation on the prior. Correlated Agreement (CA) mechanism~\cite{2016arXiv160303151S} extends \citet{dasgupta2013crowdsourced} to a non-binary setting. CA also removes the limitation on the prior. But to be informed-truthful with only a small number of questions, CA requires some prior knowledge and is not detail-free. A detail-free version of CA achieves informed truthfulness with an infinite number of questions and approximated informed truthfulness with a finite number of questions (Table~\ref{table:compare}). Although the original paper did not claim, the detail-free version of CA is also dominantly truthful with an infinite number of questions. \citet{Kong:2019:ITF:3309879.3296670} independently achieve the dominant truthfulness with an infinite number of questions and propose an information-theoretic approach. \citet{kamble2015truth} propose a mechanism where each participant can perform only one task. However, this mechanism still requires the total number of tasks to be large. Moreover, this mechanism has a weak truthfulness property: truth-telling is an equilibrium that is only better than any symmetric equilibrium where all participants perform the same strategy. In their mechanism, each participant can perform one task but the number of tasks performed by all participants should still be large. \citet{liuchen} borrow ideas from the machine learning literature on learning with noisy data and also propose a dominantly truthful mechanism. They then apply their mechanism to evaluate the information quality and recover the unknown ground truth, in the setting where the ground truth exists. \citet{liuchen}'s mechanism requires a large number of tasks, and some prior knowledge (Table~\ref{table:compare}). The evaluation rule in \citet{liuchen} has different properties from the current paper's. \citet{SchoenebeckY2019} show how to reduce the mechanism design problem of peer prediction to a learning problem which can be solved using empirical risk minimization. This enables them to reduce the task complexity significantly in their setting. Their techniques also extend to settings with continuous signals where the prior is a parametric distribution. However, unlike the current paper, they only obtain approximated truthfulness, and, in fact, they show this is a necessary limitation of their technique.

To the best of our knowledge, this paper proposes a mechanism, \dmi-Mechanism, which is both the first detail-free informed-truthful and the first detail-free dominantly truthful mechanism that requires a finite number of tasks, not to say a constant number of tasks.

    \begin{table*}[htb] \label{table:compare}
\centering
\begin{tabular}{lcc} 
    \toprule
     & ($\epsilon,\delta$)-/(0,0)-Informed-Truthful  & ($\epsilon,\delta$)-/(0,0)-Dominantly Truthful \\ 
    \toprule
   DG 2013~\cite{dasgupta2013crowdsourced} & \multirow{2}{*}{2} & \multirow{2}{*}{N/A}  \\
   (Binary-choice, limited priors) & & \\
   \midrule
    CA Mechanism \cite{2016arXiv160303151S}\footnotemark & $O(\frac{-\log \delta}{\epsilon^2})$/Infinite & (N/A)/Infinite \\ 
    \midrule
    $f$-Mutual Information \cite{Kong:2019:ITF:3309879.3296670}& \multirow{2}{*}{(N/A)/Infinite} & \multirow{2}{*}{(N/A)/Infinite}  \\ 
    Mechanism & & \\
    \midrule
     Dominant Truthful Serum~\cite{liuchen} & \multirow{2}{*}{N/A} & \multirow{2}{*}{$O(\frac{-\log \delta}{\epsilon^2})$/Infinite} \\
      (Non-detail-free) & & \\
    \midrule
    \textbf{DMI-Mechanism}  & \bm{$2C$} & \bm{$2C$} \\
    \bottomrule
\end{tabular}
\caption{A Task Sample Complexity Comparison of Multi-task Peer Prediction}
\label{table:distinguishers}

\end{table*}

\footnotetext{A non-detail-free version of CA also achieves informed truthfulness with 2 questions. Moreover, although the original paper of CA \cite{2016arXiv160303151S} does not claim the dominant truthfulness, its detail-free version is dominantly truthful.}

\paragraph{Other models} \citet{kong2018water} consider the setting where the mechanism elicits the forecasts from the crowds and each task is a forecast elicitation question (e.g. what is the probability that this start-up will succeed?). In the multi-task version of this setting, \citet{kong2018water} propose an informed-truthful mechanism. \citet{2016arXiv160607042G} propose a setting where spurious ``cheap signals'' exist (e.g. the participants collude to answer the first letter of the question, without reading the question). \citet{Kong:2018:EEW:3219166.3219172} deal with this setting by making an additional assumption, more sophisticated agents know the beliefs of less sophisticated agents.

\section{Mechanism Design for Multi-task Peer Prediction}\label{sec:prelim1}

\paragraph{Basic Notations} For integer $C>0$, we use $[C]$ to denote $\{1,2,\cdots,C\}$ and define $\mathcal{C}:=[C]$. $\Delta_{\mathcal{C}}$ is the set of all distributions over $\mathcal{C}$. $\pi:\mathcal{C}\mapsto\mathcal{C}$ is a \emph{permutation} over $\mathcal{C}$ if for every $c,c'\in\mathcal{C}$, $c\neq c'$, $\pi(c)\neq \pi(c')$. For every $C\times C$ matrix $\mathbf{A}$, $\det(\mathbf{A})$ represents $\mathbf{A}$'s determinant; $\mathrm{per}(\mathbf{A})$ represents $\mathbf{A}$'s permanent. A $C\times C$ matrix $\mathbf{A}$ is a \emph{transition matrix} if all entries of $\mathbf{A}$ is in $[0,1]$ and every row of $\mathbf{A}$ sums to 1. A $C\times C$ matrix $\mathbf{A}$ is a \emph{permutation matrix} if there exists a permutation $\pi:\mathcal{C}\mapsto\mathcal{C}$ such that $\mathbf{A}(c,\pi(c))=1,\forall c$ and all other entries are zero.  

%For two random variables $X,Y$ which have the same support $\mathcal{C}$, we use $U_{X,Y}$ to denote the joint distribution over $X$ and $Y$, i.e., \[U_{X,Y}(x,y)=\Pr[X=x,Y=y].\] We use $U_{Y|X}$ to denote the transition distribution between $X$ and $Y$, i.e. \[U_{Y|X}(x,y)=\Pr[Y=y|X=x].\] We use $\mathbf{U}_{X,Y}$ and $\mathbf{U}_{Y|X}$ to denote the $C\timesC$ matrix format of $U_{X,Y}$ and $U_{Y|X}$. 

In this section, we will introduce the multi-task peer prediction setting as well as the definition for multi-task peer prediction mechanisms. We will also introduce the formal definitions of strategy, truth-telling, permutation strategy and informed truthfulness, dominant truthfulness in the multi-task setting. 

\paragraph{Multi-task Information Elicitation} 
There are $n\geq 2$ agents. Each agent will be assigned $T$ tasks, in fact, $T$ multi-choice questions with the choice space $\mathcal{C}=[C]$ where $C$ is the number of choices (e.g. for binary-choice question, $\mathcal{C}=\{1,2\}$). For each task $t$, each agent $i$ will receive a private signal $c_i^t\in \mathcal{C}$. All agents' private signals for task $t$ are drawn from an unknown prior distribution $U_{[n]}^t\in\Delta_{\mathcal{C}^n}$. Before receiving task $t$, by denoting the unknown private signals agents will receive for task $t$ via random variables $X_1^t,X_2^t,\cdots,X_n^t$, we have 
\[\Pr[X_1^t=c_1^t, X_2^t=c_2^t,\cdots,X_n^t=c_n^t]= U_{[n]}^t(c_1^t,c_2^t,\cdots,c_n^t).\] 

Although for the same task, agents' private signals are correlated, the tasks are independent of each other, that is, $(X_i^1)_i$, $(X_i^2)_i$, ..., $(X_i^T)_i$ are independent. 

The information elicitation part's analysis focuses on the setting \emph{where agents do not need to invest efforts to obtain the private signals} (e.g. restaurant reviews). The mechanism can still be applied to the setting that considers effort, while the effort setting will have a different solution concept (for example, the dominant truthfulness will be redefined since it's impossible to incentivize effort when other people do not invest any effort).

\begin{assumption}[A priori similar tasks]\label{assume:apriori}
We assume that there exists an unknown prior distribution $U_{[n]}\in\Delta_{\mathcal{C}^n}$ such that $U_{[n]}^t = U_{[n]},\forall t$. \end{assumption}

This assumption assumes that all tasks look similar before receiving them. With this assumption, $\{(c_1^t,c_2^t,\cdots,c_n^t)\}_{t=1}^{T}$ can be seen as i.i.d. samples that are drawn from $U_{[n]}$ and we can remove the superscript $t$ and use random variables $X_1,X_2,\cdots,X_n$ to denote the unknown private signals agents will receive for every task $t$.

Unlike the single-task peer prediction setting where the common prior and the homogeneous prior assumption are commonly assumed~\cite{MRZ05,prelec2004bayesian}, in the multi-task setting, we allow agents to believe different priors (no common prior assumption) as long as all agents know that the tasks are a priori similar. For example, agent $i$ can think the prior is $U'_{[n]}$ and agent $j$ can think the prior is $U''_{[n]}$, where $U'_{[n]}$ may not equal $U''_{[n]}$. Moreover $U_{[n]}$ is allowed to be asymmetric (heterogeneous priors are allowed). For example, in the two agents setting, it's fine that $U_{[2]}(\text{yes},\text{no})$ may not equal $U_{[2]}(\text{no},\text{yes})$.

\paragraph{Multi-task Peer Prediction Mechanism}
A multi-task peer prediction mechanism will ask agents to report their private signals for $T$ tasks and collect their reports $\{(\hat{c}_1^t,\hat{c}_2^t,\cdots,\hat{c}_n^t)\}_{t=1}^{T}$ and pay each agent based on the collected reports. Formally, 

\begin{definition}[Mechanism]
We define a multi-task peer prediction mechanism $\mathcal{M}$ as a mapping $\mathcal{M}:(\mathcal{C}^n)^T \mapsto \mathbb{R}^n$ that maps from agents' reports $\{(\hat{c}_1^t,\hat{c}_2^t,\cdots,\hat{c}_n^t)\}_{t=1}^{T}$ to agents' payments $(p_1,p_2,...,p_n)$.
\end{definition}

Agents may not tell the truth and perform some strategies such that $\{(\hat{c}_1^t,\hat{c}_2^t,\cdots,\hat{c}_n^t)\}_{t=1}^{T}$ is not always equal to $\{(c_1^t,c_2^t,\cdots,c_n^t)\}_{t=1}^{T}$. Typically, the strategy of each agent should be a mapping from her received knowledge including her prior and her private signal, to a probability distribution over her report space $\mathcal{C}$. But since all agents' priors are fixed when they play the mechanism, without loss of generality, we omit the prior in the definition of strategy.

\begin{definition}[Strategy]
Given a multi-task peer prediction mechanism $\mathcal{M}$, we define each agent $i$'s strategy for each task $t$ as a $S_i^t:\mathcal{C}:\mapsto \Delta_{\mathcal{C}}$ such that given agent $i$ receives private $c_i^t$, she will randomly draw a signal $\hat{c}_i^t$ according to the distribution $S_i^t(c_i^t)$. \end{definition}

Every strategy $S_i^t$ corresponds to a $C\times C$ transition matrix $\mathbf{S}_i^t$ where $\mathbf{S}_i^t(c_i^t,\hat{c}_i^t)$ is the probability that agent $i$ reports $\hat{c}_i^t$ given that she receives private signal $c_i^t$. Agent $i$ plays a \emph{truthful} strategy if for every task $t$, $\mathbf{S}_i^t$ is an identity matrix. Agent $i$ plays a \emph{permutation} strategy if there exists a permutation matrix $\mathbf{P}$ such that for every task $t$, $\mathbf{S}_i^t=\mathbf{P}$. Truthful strategy is also a permutation strategy. 

Recall that we use random variables $X_1,X_2,\cdots,X_n$ to denote the private signals agents will receive for every task $t$. We omit the superscript $t$ due to the a priori similar tasks assumption. For each task $t$, we use $\hat{X}_1^t,\hat{X}_2^t,\cdots,\hat{X}_n^t$ to denote the signals agents will report for task $t$. The distribution of $\hat{X}_1^t,\hat{X}_2^t,\cdots,\hat{X}_n^t$ depends not only on the distribution of $X_1,X_2,\cdots,X_n$ but also on agents' strategies. We will remove the superscript $t$ with the following assumption.

%We keep the superscript $t$ here since agents are allowed to play different strategies for different tasks. However, \citet{2016arXiv160303151S} show that when agents receive tasks in independent random orders, even if they play different strategies for different tasks, it's sufficient to only consider consistent strategies, i.e., same strategy for all tasks. 

\begin{assumption}[Consistent strategy] \label{assume:consistent}
    We assume every agent $i$ plays the same strategy $S_i$ for all tasks. \end{assumption}

This assumption is reasonable when agents receive tasks in independent random orders. \citet{2016arXiv160303151S} show that the above assumption can be removed when the mechanism is linear in the joint distribution over agents' reports. However, the mechanism designed in this paper is not linear. Thus, the consistent strategy assumption is assumed here. With this assumption, we can use $\hat{X}_1,\hat{X}_2,\cdots,\hat{X}_n$ to denote the signals agents will report for every task $t$. We define a \emph{strategy profile} $\mathbf{S}$ as $(S_1,S_2,\cdots,S_n)$.

A \emph{Bayesian Nash equilibrium} consists of a strategy profile $\mathbf{S}$ such that no agent wishes to change her strategy since other strategies will decrease her expected payment, given the strategies of the other agents and the information contained in her prior.

We start to present the mechanism design goals. We first define informative peer. Informally, two agents are informative peers if their private signals are ``strictly correlated''. %We will define the dominant truthfulness in the sense that for every agent, truth-telling is the best no matter what other agents will do and truth-telling is strictly better than any other non-permutation strategy if she believes her informative peer will tell the truth.

\begin{definition}[Informative peer]
    Agent $i$ and agent $j$ are each other's informative peer if $\det(\mathbf{U}_{X_i,X_j})\neq 0$, where $U_{X_i,X_j}$ denotes the prior distribution over $X_i$ and $X_j$. $\mathbf{U}_{X_i,X_j}$ is the matrix format of $U_{X_i,X_j}$.  
\end{definition}

The first design goal is to incentivize the rational agents to participate in the game, with some mild conditions. We then formally define the informed truthfulness and the dominant truthfulness. 

\begin{definition}[Individual rationality]
    A mechanism is individually rational if each agent's expected payment is strictly positive if she tells the truth and at least one of her informative peers participates and tells the truth. 
\end{definition}

\begin{definition}[Informed truthfulness]
A mechanism $\mathcal{M}$ is informed-truthful if the strategy profile where everyone playing truthfully 1) is a \emph{Bayesian Nash equilibrium}; and 2) pays everyone higher than any non-truthful strategy profile and when each agent has at least one informative peer, pays everyone strictly higher than any uninformative strategy profile where $\hat{X}_1,\hat{X}_2,\cdots,\hat{X}_n$ are independent.
\end{definition}

\begin{definition}[Dominant truthfulness]
A mechanism $\mathcal{M}$ is dominantly truthful if 1) for every agent, truthful strategy maximizes her expected payment no matter what strategies other agents play; and 2) if she believes at least one of her informative peers will tell the truth, then truthful strategy pays her \emph{strictly} higher than any non-$\full$ strategy.
\end{definition}

The naive flat-payment mechanism is not informed-truthful nor dominantly truthful since it does not satisfy the second requirement of both definitions. A mechanism is dominantly truthful does not mean that it is informed-truthful. In a dominantly truthful mechanism, truth-telling is an equilibrium but may not be the best strategy profile (see \citet{Kong:2019:ITF:3309879.3296670} for a concrete counterexample). Thus, dominant truthfulness does not dominate informed truthfulness. The next section will show that \dmi-Mechanism is not only dominantly truthful but also informed-truthful. 

\section{Determinant based Mutual Information}\label{sec:dmi}

This section introduces the main technical ingredient, a novel information measure, Determinant based Mutual Information (DMI). DMI is a generalization of Shannon's mutual information. Like Shannon mutual information (MI) \cite{Shannon:2001:MTC:584091.584093}, DMI is non-negative, symmetric and also satisfies data processing inequality. In the next section, we will see it's easy to construct an unbiased estimator of the square of DMI with a constant number of samples, which is different from MI who has a log function in the formula. This is why that we can use DMI to construct a dominantly truthful mechanism that works for a constant number of tasks.  

For two random variables $X,Y$ which have the same support $\mathcal{C}$, we use $U_{X,Y}$ to denote the joint distribution over $X$ and $Y$, i.e., \[U_{X,Y}(x,y)=\Pr[X=x,Y=y].\] We use $U_{Y|X}$ to denote the transition distribution between $X$ and $Y$, i.e. \[U_{Y|X}(x,y)=\Pr[Y=y|X=x].\] We use $\mathbf{U}_{X,Y}$ and $\mathbf{U}_{Y|X}$ to denote the $C\times C$ matrix format of $U_{X,Y}$ and $U_{Y|X}$.

\begin{definition}[DMI]\label{def:dmi}
    Given two random variables $X,Y$ which have the same support $\mathcal{C}$, we define the determinant mutual information between $X$ and $Y$ as $$\dmi(X;Y)=|\det(\mathbf{U}_{X,Y})|.$$
\end{definition}

When both $X$ and $Y$ are binary (0 or 1), by simple calculations, |\dmi(X;Y)| is proportional to the classic correlation formula $|\E X Y-\E X\E Y|$. For non-binary variables, \dmi(X;Y) gives a new information measure. 

\begin{lemma}\label{lem:dmi}
$\dmi(\cdot;\cdot)$ satisfies
    \begin{description}
    \item [Symmetry] $\dmi(X;Y)=\dmi(Y;X)$;
    \item [Non-negativity and boundedness] for all $X$ and $Y$, $\dmi(X;Y)$ is in $[0,(\frac{1}{C})^C]$ and when $X$ is independent of $Y$, $\dmi(X;Y)=0$; 
      
\item [(Strict) Information-Monotonicity] for every two random variables $X,Y$ which have the same support $\mathcal{C}$, when $X'$ is less informative than $X$, i.e., $X'$ is independent of $Y$ conditioning $X$,  $\dmi(X';Y)\leq \dmi(X;Y)$. The inequality is strict when $\det(\mathbf{U}_{X,Y})\neq 0$ and $\mathbf{U}_{X'|X}$ is not a permutation matrix.  \item [Relatively Invariance] in the above setting, $\dmi(X';Y)=\dmi(X;Y)|\det(\mathbf{U}_{X'|X})|$. 
  \end{description}
\end{lemma}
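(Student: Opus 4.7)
All four claims reduce to a single matrix identity plus standard facts about determinants, so the strategy is to establish the identity once and then read off the rest. Symmetry is immediate: $\mathbf{U}_{Y,X}=\mathbf{U}_{X,Y}^{\top}$ and $\det(\mathbf{A})=\det(\mathbf{A}^{\top})$, so the absolute values match. If $X\perp Y$, then $\mathbf{U}_{X,Y}$ is the outer product of the two marginal vectors, hence has rank at most $1$, and (for $C\ge 2$) its determinant vanishes. Non-negativity is built into the absolute value in Definition~\ref{def:dmi}.

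\textbf{The key factorization, and Relative Invariance.} First I would write out, using the conditional independence $X'\perp Y\mid X$,
\[
\Pr[X'=x',Y=y]=\sum_{x}\Pr[X'=x'\mid X=x]\,\Pr[X=x,Y=y],
\]
which in matrix form is exactly $\mathbf{U}_{X',Y}=\mathbf{U}_{X'\mid X}^{\top}\,\mathbf{U}_{X,Y}$. Taking determinants and absolute values, multiplicativity of $\det$ gives
\[
\dmi(X';Y)=|\det(\mathbf{U}_{X'\mid X})|\cdot\dmi(X;Y),
\]
which is Relative Invariance. This single identity will drive both the monotonicity and the strict monotonicity claims.

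\textbf{Boundedness and (non-strict) monotonicity via Hadamard.} For the upper bound on $\dmi(X;Y)$, I would invoke Hadamard's inequality $|\det(\mathbf{A})|\le\prod_i\|\mathbf{A}_i\|_2$ together with $\|\mathbf{A}_i\|_2\le\|\mathbf{A}_i\|_1$. Applied to $\mathbf{U}_{X,Y}$, whose row sums are the marginals $\Pr[X=x]$, this gives $\dmi(X;Y)\le\prod_x\Pr[X=x]$, and AM--GM under the constraint $\sum_x\Pr[X=x]=1$ caps the right-hand side at $(1/C)^{C}$. Applied instead to the transition matrix $\mathbf{U}_{X'\mid X}$, whose row sums are all $1$, the same chain gives $|\det(\mathbf{U}_{X'\mid X})|\le 1$, and combining with Relative Invariance yields $\dmi(X';Y)\le\dmi(X;Y)$.

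\textbf{Strict monotonicity --- the main obstacle.} The delicate point is pinning down exactly when $|\det(\mathbf{U}_{X'\mid X})|=1$ for a row-stochastic matrix; this is where I expect to spend the most care. I would unpack Hadamard's equality conditions in two stages. Equality $\|\mathbf{A}_i\|_2=\|\mathbf{A}_i\|_1=1$ forces each row to have a single nonzero entry equal to $1$, i.e.\ a standard basis vector; and $|\det(\mathbf{A})|=\prod_i\|\mathbf{A}_i\|_2$ requires the rows to be mutually orthogonal, which for distinct standard basis vectors means they form a permutation of the basis. Hence $|\det(\mathbf{U}_{X'\mid X})|=1$ characterizes exactly the permutation matrices. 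Under the strict hypothesis $\det(\mathbf{U}_{X,Y})\ne 0$ we have $\dmi(X;Y)>0$, so Relative Invariance turns any $|\det(\mathbf{U}_{X'\mid X})|<1$ into a strict decrease in DMI, completing the strict information-monotonicity claim.
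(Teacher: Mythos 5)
Your proof is correct, and its skeleton matches the paper's: symmetry from $\det(\mathbf{A}^{\top})=\det(\mathbf{A})$, vanishing under independence from the rank-one structure of $\mathbf{U}_{X,Y}$, and relative invariance plus monotonicity from the factorization $\mathbf{U}_{X',Y}=\mathbf{U}_{X'\mid X}^{\top}\mathbf{U}_{X,Y}$ together with multiplicativity of the determinant (the transpose you include is the correct reading of the paper's indexing convention and is immaterial for determinants). Where you genuinely diverge is in the auxiliary fact that a square stochastic matrix $\mathbf{T}$ satisfies $|\det(\mathbf{T})|\leq 1$ with equality exactly at permutation matrices: the paper states this as Fact~\ref{fact:dmi} and proves it in the appendix by induction on the dimension via cofactor expansion along the first row, whereas you derive it from Hadamard's inequality combined with $\|v\|_2\leq\|v\|_1$ and the equality conditions of both steps. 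Your equality analysis is sound: for a row-stochastic matrix every $\|\mathbf{A}_i\|_1=1$, so the chain $1=|\det(\mathbf{A})|\leq\prod_i\|\mathbf{A}_i\|_2\leq 1$ forces each row to be a standard basis vector and the orthogonality condition then forces a permutation. You also handle boundedness differently: Hadamard plus AM--GM applied directly to $\mathbf{U}_{X,Y}$, whose row sums are the marginals of $X$, versus the paper's factorization through the diagonal matrix $\mathbf{U}_{Y,Y}$ followed by the stochastic-matrix bound. The trade-off is that your route is more uniform---one inequality yields boundedness, monotonicity, and the equality characterization at once---while the paper's inductive argument is more elementary in that it requires no knowledge of when Hadamard's inequality is tight.
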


\begin{proof}

DMI is symmetric since $\det(\mathbf{A}^{\top})=\det(\mathbf{A})$. DMI is non-negative since it is defined as the absolute value of the joint distribution matrix's determinant and when $X$ and $Y$ are independent, $\mathbf{U}_{X,Y}$ is rank one thus it has zero determinant. 

We start to show the relative invariance and the information-monotonicity. Note that when $X'$ is independent of $Y$ conditioning $X$, \[U_{X',Y}(x',y)=\Pr[X'=x',Y=y]=\sum_x \Pr[X'=x'|X=x]\Pr[X=x,Y=y].\] Thus, $\mathbf{U}_{X',Y}=\mathbf{U}_{X'|X}\mathbf{U}_{X,Y}$. Then

\begin{align*}
\dmi(X';Y)
=& |\det(\mathbf{U}_{X',Y})| \\ \tag{$\mathbf{U}_{X',Y}=\mathbf{U}_{X'|X}\mathbf{U}_{X,Y}$}
=& |\det(\mathbf{U}_{X'|X}\mathbf{U}_{X,Y})|\\ \tag{$\det(\mathbf{AB})=\det(\mathbf{A})\det(\mathbf{B})$}
= & \dmi(X;Y) |\det(\mathbf{U}_{X'|X})|\\ 
\leq & \dmi(X;Y)
\end{align*}
The above formula shows the relative invariance and the information-monotonicity. The last inequality holds since for every square transition matrix $\mathbf{T}$, $|\det(\mathbf{T})|\leq 1$ and the equality holds if and only if $\mathbf{T}$ is a permutation matrix (Fact~\ref{fact:dmi} \cite{seneta2006non})\footnote{Appendix~\ref{appendix:a} also presents a direct and basic proof for this fact as a reference.}. Thus, the inequality is strict when $\det(\mathbf{U}_{X,Y})\neq 0$ and $\mathbf{U}_{X'|X}$ is not a permutation matrix. 

Finally, we show that $\dmi(X;Y)$ is bounded by $(\frac{1}{C})^C$ for all $X$ and $Y$. 
\begin{align*}
	\dmi(X;Y)=&|\det(\mathbf{U}_{X,Y})| \\
	=& |\det(\mathbf{U}_{X|Y}\mathbf{U}_{Y,Y})|\\
	\leq & |\det(\mathbf{U}_{Y,Y})|\leq  (\frac{1}{C})^C
\end{align*}
\end{proof}

\section{\dmi-Mechanism}\label{sec:dmim}

The original idea of peer prediction~\cite{MRZ05} is based on a clever insight: every agent's information is related to her peers' information and therefore can be checked using her peers' information. Inspired by this, \citet{Kong:2019:ITF:3309879.3296670} propose a natural yet powerful information-theoretic mechanism design idea---paying every agent the ``mutual information'' between her reported information and her peer's reported information where the ``mutual information'' should be \emph{information-monotone}---any ``data processing'' on the two random variables will decrease the ``mutual information'' between them. As we assume the agents want to maximize their payments in expectation, it's sufficient to design mechanisms such that the payment in the mechanism is an unbiased estimator of the information-monotone measure.

However, the unbiased estimator of information-monotone measure used in \citet{Kong:2019:ITF:3309879.3296670}, $f$-mutual information and Bregman-mutual information, cannot be constructed with a constant number of samples. The main technical reason is that those measures do not have a polynomial format. Unlike the previous measures, DMI's square has a polynomial format such that its unbiased estimator can be constructed with a constant number of samples, in fact, $2C$ samples. With this nice property of DMI, we propose a novel mechanism, \dmi-Mechanism, based on DMI such that \dmi-Mechanism is dominantly truthful with only a constant number of tasks. 

\begin{figure*}[h!]
    \centering
    \includegraphics[width=5.5in]{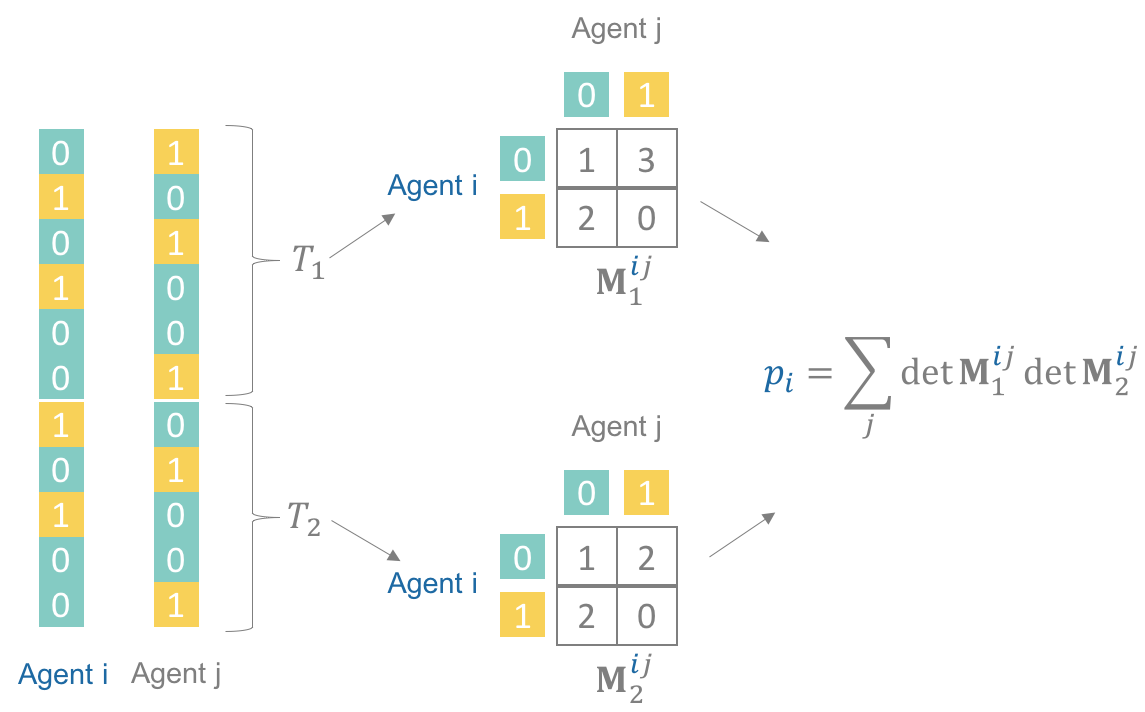}
    \caption{The illustration of the \dmi-Mechanism: we use binary-choice questions to illustrate the mechanism, while this mechanism works for general multi-choice questions. The leftmost part is agent $i$ and agent $j$'s answers for $T=11$ binary-choice questions. The $T$ questions are divided into two parts, $T_1$ and $T_2$. Each $T_{\ell}$ corresponds to a $2\times 2$ matrix $\mathbf{M}_{\ell}^{ij}$ such that for each $(c,c')\in \{0,1\}^2$, $\mathbf{M}_{\ell}^{ij}(c,c')$ counts the number of tasks in $T_{\ell}$ that agent $i$ answers $c$ and agent $j$ answers $c'$. With these two matrices, agent $j$'s contribution to agent $i$'s payment is defined as $\det \mathbf{M}_1^{ij}\det \mathbf{M}_2^{ij}=-6\times (-4)=24$. In this running example, agent $i$ and agent $j$'s answers are actually negatively correlated such that both $\det \mathbf{M}_1^{ij}$ and $\det \mathbf{M}_2^{ij}$ are negative, while $\det \mathbf{M}_1^{ij}\det \mathbf{M}_2^{ij}$ is still positive. Thus, when agent $i$ and agent $j$'s true answers are negatively correlated, for agent $i$, truth-telling is still not worse than permuting her answers to match agent $j$ in \dmi-Mechanism. } 
    \label{fig:ex}
\end{figure*}

\paragraph{\dmi-Mechanism} $n$ agents are assigned $T\geq 2C$ a priori similar tasks in independent random orders. Agents finish the tasks without any communication. 
\begin{description}
\item[Report] For each task $t$, each agent $i$ privately receives $c_i^t$ and reports $\hat{c}_i^t$. 
\item[Payment] $T$ tasks are arbitrarily divided into two disjoint parts $T_1,T_2$ such that both $|T_1|,|T_2|\geq C$. For every two agents $i\neq j\in [n]$, we define two $C\times C$ \emph{answer matrices} $\mathbf{M}_1^{ij}, \mathbf{M}_2^{ij}$ where for $\ell=1,2$ \[\mathbf{M}_{\ell}^{ij}(c,c'):=\sum_{t\in T_{\ell}} \mathbbm{1}\left( (\hat{c}_i^t, \hat{c}_j^t )=(c,c')\right)\]

Agent $i$'s payment is 
\[ p_i:= \sum_{j\neq i\in [n]}\det(\mathbf{M}_1^{ij})\det(\mathbf{M}_2^{ij}) \]

\end{description}

\paragraph{Normalization of DMI-mechanism} In the above definition of DMI-mechanism, the payment depends on the number of participants and the number of tasks, which may induce unbounded payments. This issue can be addressed by normalizing the payments:

\[ q_i=\mathrm{norm}(p_i):=\frac{p_i}{(n-1)*(C!)^2 {|T_1|\choose C}{|T_2|\choose C}}.\] The following theorem will show that the normalized payment is in $[0,(\frac{1}{C})^C]$ in expectation. Since any linear transformation $a p_i + b, a>0,b\geq 0$ on the payments will not change the truthful and rational properties, the normalized DMI-mechanism will also be detail-free, minimal, individually rational, informed-truthful and dominantly truthful.

\begin{theorem}[Main Theorem]\label{thm:main}
When $n\geq 2$ and $T\geq 2 C$, \dmi-Mechanism is detail-free, minimal, individually rational, informed truthful, and dominantly truthful. In the normalized DMI-mechanism, every participant's expected payment is in $[0,(\frac{1}{C})^C]$. 
\end{theorem}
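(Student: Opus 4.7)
Detail-freeness and minimality are immediate from inspection of the mechanism (no prior knowledge is used, and agents are only asked for their signals). The whole theorem reduces to computing the expected payment of an agent as a function of the strategy profile. My plan is to establish the identity
\[
\mathbb{E}[p_i] \;=\; (C!)^2 \binom{|T_1|}{C}\binom{|T_2|}{C} \sum_{j\neq i} \dmi(\hat{X}_i;\hat{X}_j)^2,
\]
and then read off every truthfulness property from the properties of DMI in Lemma~\ref{lem:dmi}.

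The heart of the proof is the identity $\mathbb{E}[\det(\mathbf{M}_{\ell}^{ij})] = C!\binom{|T_\ell|}{C}\det(\mathbf{U}_{\hat{X}_i,\hat{X}_j})$. I would expand the determinant as $\det(\mathbf{M}_\ell^{ij}) = \sum_\pi \sign(\pi) \prod_c \mathbf{M}_\ell^{ij}(c,\pi(c))$, substitute the counting definition of $\mathbf{M}_\ell^{ij}(c,\pi(c))$, and expand the product over $c$ as a sum over $C$-tuples $(t_1,\dots,t_C)$ of tasks in $T_\ell$. The decisive observation is that if any two indices coincide, say $t_c = t_{c'}$ with $c\neq c'$, then the indicators $\mathbbm{1}[(\hat{c}_i^{t_c},\hat{c}_j^{t_c})=(c,\pi(c))]$ and $\mathbbm{1}[(\hat{c}_i^{t_{c'}},\hat{c}_j^{t_{c'}})=(c',\pi(c'))]$ cannot both be $1$, so only the $C!\binom{|T_\ell|}{C}$ distinct ordered $C$-tuples contribute; this step is where the hypothesis $|T_\ell|\geq C$ enters. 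By consistency of strategies and independence across tasks, each distinct tuple contributes $\prod_c U_{\hat{X}_i,\hat{X}_j}(c,\pi(c))$ in expectation, and summing over $\pi$ with signs gives the claimed determinant. Since $T_1$ and $T_2$ are disjoint, $\det(\mathbf{M}_1^{ij})$ and $\det(\mathbf{M}_2^{ij})$ are independent random variables, so the products multiply in expectation, and summing over $j$ yields the displayed formula above. This is the step I expect to be the main obstacle: getting the combinatorial bookkeeping right and cleanly arguing that the ``diagonal'' terms vanish.

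With the expectation formula in hand, dominant truthfulness is immediate: agent $i$'s strategy affects only $\hat{X}_i$, and since $\hat{X}_i$ is obtained from $X_i$ by a channel independent of $\hat{X}_j$, the information-monotonicity part of Lemma~\ref{lem:dmi} gives $\dmi(\hat{X}_i;\hat{X}_j) \leq \dmi(X_i;\hat{X}_j)$ for every $j$, with strict inequality when $\mathbf{S}_i$ is not a permutation and $\hat{X}_j$ is an informative truthful peer. Informed truthfulness follows from the relative invariance part: writing $\mathbf{U}_{\hat{X}_i,\hat{X}_j} = \mathbf{S}_i^\top \mathbf{U}_{X_i,X_j}\mathbf{S}_j$ and taking determinants gives $\dmi(\hat{X}_i;\hat{X}_j) = |\det \mathbf{S}_i|\,\dmi(X_i;X_j)\,|\det \mathbf{S}_j|$, and since $|\det \mathbf{S}_k|\leq 1$ for every transition matrix $\mathbf{S}_k$, every agent's expected payment is maximized at the all-truthful profile; the payment is strictly positive at truth-telling when an informative peer exists, while any uninformative profile forces the joint matrix to have rank one and hence zero determinant, giving zero payment and establishing the strict part.

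Individual rationality then drops out of the same calculation, since a truthful report against a truthful informative peer contributes a positive $\dmi(X_i;X_j)^2$ term. For the final boundedness claim, I would divide the expectation formula by the normalization constant $(n-1)(C!)^2\binom{|T_1|}{C}\binom{|T_2|}{C}$, obtaining $\mathbb{E}[q_i]=\frac{1}{n-1}\sum_{j\neq i}\dmi(\hat{X}_i;\hat{X}_j)^2$, which is nonnegative, and bounded above by the DMI bound $(1/C)^{C}$ established in Lemma~\ref{lem:dmi}.
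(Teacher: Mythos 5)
Your proposal is correct and follows essentially the same route as the paper: the same key identity $\E[\det(\mathbf{M}_{\ell}^{ij})]=C!\binom{|T_\ell|}{C}\det(\mathbf{U}_{\hat{X}_i,\hat{X}_j})$ proved by the same expansion into indicator terms over ordered distinct $C$-tuples (with coinciding indices killing the term), independence of the two half-determinants from the disjointness of $T_1,T_2$, and the truthfulness and boundedness claims read off from the information-monotonicity, relative-invariance, and boundedness properties of DMI exactly as in the paper's proof.
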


Note that $\frac{1}{T_{\ell}}\mathbf{M}_{\ell}^{ij}(c,c')$ is an unbiased estimator of $\Pr[\hat{X}_i=c,\hat{X}_j=c']$ (see preliminaries for the definition of $\hat{X}_i,\hat{X}_j$). Thus, a naive thought would be to show that $(\frac{1}{T_{\ell}})^{C}\det(\mathbf{M}_{\ell}^{ij})$ is an unbiased estimator of $\det(\mathbf{U}_{\hat{X}_i ,\hat{X}_j})$, where $U_{\hat{X}_i,\hat{X}_j}$ is the joint distribution over $\hat{X}_i,\hat{X}_j$. However, since the entries of $\mathbf{M}_{\ell}^{ij}$ are not independent, the naive thought is not valid. Luckily, by carefully analyzing the correlation among the entries of $\mathbf{M}_{\ell}^{ij}$ and writing down the explicit formula of determinant, we will see $\det(\mathbf{M}_{\ell}^{ij})$ can be written as a sum of indicator variables and each indicator variable corresponds $C$ ordered distinct tasks in $T_{\ell}$. With this observation, the native thought is true by replacing the constant $(\frac{1}{T_{\ell}})^{C}$ by one over the number of ordered $C$ distinct tasks (in fact, $\frac{1}{{|T_{\ell}|\choose C} C!}$). Moreover, this observation also implies that every $|T_{\ell}|$ must be greater than $C$ to have at least $C$ distinct tasks such that the expected payment can be non-zero. 

\begin{proof}

With the assumption that that \dmi-Mechanism assign a priori similar tasks (Assumption~\ref{assume:apriori}) and agents play consistent strategies (Assumption~\ref{assume:consistent}), we can use $\hat{X}_1,\hat{X}_2,\cdots,\hat{X}_n$ to denote the signals agents will report for every task and $\{(\hat{c}_1^t,\hat{c}_2^t,\cdots,\hat{c}_n^t)\}_{t=1}^{T}$ are $T$ i.i.d. samples of $\hat{X}_1,\hat{X}_2,\cdots,\hat{X}_n$. We first claim that:

\begin{claim}\label{claim:dmi}
    The expected $\det(\mathbf{M}_{\ell}^{ij})$ is proportional to $\det(\mathbf{U}_{\hat{X}_i ,\hat{X}_j})$, that is, there exists a constant $a_{\ell}$ such that $\det(\mathbf{M}_{\ell}^{ij})=a_{\ell} \det(\mathbf{U}_{\hat{X}_i ,\hat{X}_j})$. In fact, $a_{\ell}={|T_{\ell}|\choose C} C!$.
\end{claim}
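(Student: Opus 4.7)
The plan is to expand $\det(\mathbf{M}_\ell^{ij})$ via the Leibniz (explicit sum-over-permutations) formula, expand each matrix entry as its defining sum over tasks, and then exploit a clean cancellation: whenever the expanded product uses the same task twice, the corresponding indicator product is identically zero.

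Concretely, first I would write
\[
\det(\mathbf{M}_\ell^{ij}) \;=\; \sum_{\pi \in S_C} \sign(\pi) \prod_{c=1}^{C} \mathbf{M}_\ell^{ij}(c,\pi(c)),
\]
and substitute $\mathbf{M}_\ell^{ij}(c,\pi(c)) = \sum_{t \in T_\ell} \mathbbm{1}((\hat{c}_i^t,\hat{c}_j^t)=(c,\pi(c)))$. Expanding the product turns $\prod_c \mathbf{M}_\ell^{ij}(c,\pi(c))$ into a sum over $C$-tuples $(t_1,\ldots,t_C) \in T_\ell^C$ of indicator products $\prod_c \mathbbm{1}((\hat{c}_i^{t_c},\hat{c}_j^{t_c})=(c,\pi(c)))$.

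The crucial step is the observation that this indicator product vanishes unless the $t_c$ are all distinct: if $t_c = t_{c'}$ for some $c \neq c'$, then the first coordinate $\hat{c}_i^{t_c}$ would have to equal both $c$ and $c'$, which is impossible. So the sum collapses to ordered $C$-tuples of distinct tasks, of which there are exactly $\binom{|T_\ell|}{C} C!$. For each such distinct tuple, by Assumptions~\ref{assume:apriori} and~\ref{assume:consistent} the joint reports $(\hat{c}_i^{t_c},\hat{c}_j^{t_c})$ are i.i.d. copies of $(\hat{X}_i,\hat{X}_j)$ across different tasks, so the expectation of the indicator product factors as $\prod_{c=1}^{C} U_{\hat{X}_i,\hat{X}_j}(c,\pi(c))$.

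Finally, I would put everything back together:
\[
\E[\det(\mathbf{M}_\ell^{ij})] \;=\; \binom{|T_\ell|}{C} C! \sum_{\pi \in S_C} \sign(\pi) \prod_{c=1}^{C} U_{\hat{X}_i,\hat{X}_j}(c,\pi(c)) \;=\; \binom{|T_\ell|}{C} C! \cdot \det(\mathbf{U}_{\hat{X}_i,\hat{X}_j}),
\]
which is exactly the claim with $a_\ell = \binom{|T_\ell|}{C} C!$. The only subtle point, which one might initially fear to be the main obstacle, is that the entries of $\mathbf{M}_\ell^{ij}$ are highly correlated (they are constrained to sum to $|T_\ell|$, and each task contributes to exactly one entry), so a naive factoring of $\E[\prod_c \mathbf{M}_\ell^{ij}(c,\pi(c))]$ is illegal. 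The vanishing-on-repeated-tasks observation is precisely what sidesteps this correlation and lets independence be applied only across genuinely distinct tasks; this is also where the requirement $|T_\ell| \geq C$ becomes necessary, since otherwise there are no ordered $C$-tuples of distinct tasks and the expectation is forced to zero.
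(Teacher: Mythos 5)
Your proposal is correct and follows essentially the same route as the paper's own proof: expand the determinant via the Leibniz formula, expand each entry as a sum of per-task indicators, observe that the indicator product vanishes whenever two indices coincide (since one task cannot simultaneously give agent $i$ two different answers), and then use independence across distinct tasks to factor the expectation into $\prod_c \mathbf{U}_{\hat{X}_i,\hat{X}_j}(c,\pi(c))$, yielding $a_\ell = \binom{|T_\ell|}{C}\, C!$. Your explicit justification of the vanishing-on-repeated-tasks step and the remark on why $|T_\ell|\geq C$ is needed match the paper's reasoning; no gaps.
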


Recall that the reports are i.i.d. samples. Then $\det(\mathbf{M}_1^{ij})$ and $\det(\mathbf{M}_2^{ij})$ are independent, since they are related to two distinct sets of tasks. Thus, with the above claim, the expected payment of agent $i$ will be

\[ \sum_{j\neq i}a_1 a_2 \dmi^2(\hat{X}_i;\hat{X}_j).\] 

This implies that the normalized payment is in $[0,(\frac{1}{C})^C]$ in expectation. Note that agents finish the tasks without communication, which implies that agents play strategies independently. Thus, $\hat{X}_i$ is less informative than $X_i$, i.e., conditioning on $X_i$, $\hat{X}_i$ is independent of other agents' reports. Therefore, due to the information-monotonicity of DMI, 

\begin{align*}
    \sum_{j\neq i}a_1 a_2 \dmi^2(\hat{X}_i;\hat{X}_j) \leq & \sum_{j\neq i}a_1 a_2 \dmi^2(X_i;\hat{X}_j)\\
    \leq & \dmi^2(X_i;X_j)
\end{align*}

The first inequality shows that truth-telling is a dominant strategy, which also implies that everyone playing truthfully is an equilibrium. The second inequality shows that everyone playing truthfully pays everyone better than any non-truthful strategy profile. It's left to prove the strictness conditions. 

When one of agent $i$'s informative peers, call him agent $j$, reports truthfully and agent $i$ plays a non-permutation strategy, we have 
\begin{align*}
\dmi^2(\hat{X}_i;\hat{X}_j)=&\dmi^2(\hat{X}_i;X_j)\\
<&\dmi^2(X_i;X_j)\\
=&\dmi^2(X_i;\hat{X}_j)
\end{align*}

The strict inequality follows from the strict information-monotonicity of DMI. This shows that \dmi-Mechanism satisfies the strictness condition of dominant truthfulness. For the strictness condition of informed truthfulness, due to the non-negativity of \dmi, each agent's expected payment is zero in all uninformative strategy profiles and for each agent $i$, if she plays truthfully and one of her informative peers plays truthfully as well, agent $i$'s expected payment will be strictly positive. Therefore, \dmi-Mechanism is individually rational, dominantly truthful and informed truthful. It remains to show the above claim. 

We define $I_t^{ij}(c,c'):=\mathbbm{1}\left( (\hat{c}_i^t, \hat{c}_j^t )=(c,c')\right)$ as an indicator random variable that indicates whether agent $i$ and agent $j$'s answers for task $t$ is $(c,c')$. %The randomness of $I_t^{ij}(c,c')$ comes from the randomness of Alice and Bob's answers for the $i^{th}$ task. 

\begin{align*} 
&\det(\mathbf{M}_{\ell}^{ij})\\
=&  \sum_{\pi}\sign(\pi) \prod_{c\in\mathcal{C}} \mathbf{M}_{\ell}^{ij}(c,\pi(c))\\
=&  \sum_{\pi}\sign(\pi) \prod_{c\in\mathcal{C}} \sum_{t\in T_{\ell}} I_t^{ij}(c,\pi(c))\\
=&  \sum_{\pi}\sign(\pi) \prod_{c\in\mathcal{C}} \sum_{t_{c}\in T_{\ell}} I_{t_{c}}^{ij}(c,\pi(c))\\
=& \sum_{\pi}\sign(\pi) \sum_{t_1\in T_{\ell}}\cdots\sum_{t_{c}\in T_{\ell}}\cdots\sum_{t_{C}\in T_{\ell}} \prod_{c}  I_{t_{c}}^{ij}(c,\pi(c))\\
=& \sum_{\pi}\sign(\pi) \sum_{(t_1,\cdots,t_{c},\cdots,t_{C})\in T_{\ell}^{C}} \prod_{c}  I_{t_{c}}^{ij}(c,\pi(c))\\ %\tag{For indistinct $(t_1,\cdots,t_{c},\cdots,t_{C})$, e.g. $t_1=t_2$, one of $I_{t_1}^{ij}(1,\pi(1))$ and $I_{t_2}^{ij}(2,\pi(2))$ must be zero.}
%=& \sum_{\pi}\sign(\pi) \sum_{(t_1,\cdots,t_{c},\cdots,t_{C})\in T_{\ell}^{C}, \text{ and $(t_1,\cdots,t_{c},\cdots,t_{C})$ are distinct}} \prod_{c}  I_{t_{c}}^{ij}(c,\pi(c))\\
\end{align*}

It's sufficient to only consider distinct tasks. We use $\mathbf{t}_{\ell}=(\mathbf{t}_{\ell}(1),\mathbf{t}_{\ell}(2),\cdots,\mathbf{t}_{\ell}(C))$ to denote ordered $C$ distinct tasks in $T_{\ell}$ and $I_{\mathbf{t}_{\ell}}^{ij}(\pi):=\Pi_c I_{\mathbf{t}_{\ell}(c)}^{ij}(c,\pi(c))$. Note that for indistinct $(t_1,\cdots,t_{c},\cdots,t_{C})$, e.g. $t_1=t_2$, one of $I_{t_1}^{ij}(1,\pi(1))$ and $I_{t_2}^{ij}(2,\pi(2))$ must be zero. Thus, 
\[ \det(\mathbf{M}_{\ell}^{ij})= \sum_{\pi}\sign(\pi)\sum_{\mathbf{t}_{\ell}} I_{\mathbf{t}_{\ell}}^{ij}(\pi)\]

We use $b_{\pi}$ to denote the expectation of $I_{\mathbf{t}_{\ell}}^{ij}(\pi)$, then since distinct tasks are independent,
\begin{align*}
    b_{\pi}:=& \E_{\hat{X}_i ,\hat{X}_j} I_{\mathbf{t}_{\ell}}^{ij}(\pi)\\
           = & \E_{\hat{X}_i ,\hat{X}_j} \Pi_c I_{\mathbf{t}_{\ell}(c)}^{ij}(c,\pi(c))\\ 
          % \tag{Distinct tasks are independent.}
           = & \prod_{c} \Pr[\hat{X}_i =c, \hat{X}_j=\pi(c)].\\
           = & \prod_{c} \mathbf{U}_{\hat{X}_i,\hat{X}_j}(c,\pi(c))
\end{align*}

Thus, 

\begin{align*}
&\E_{\hat{X}_i ,\hat{X}_j}(\mathbf{M}_{\ell}^{ij})\\=& \sum_{\pi}\sign(\pi)\sum_{\mathbf{t}_{\ell}} b_{\pi}\\ 
=& \sum_{\mathbf{t}_{\ell}}  \sum_{\pi}\sign(\pi)b_{\pi}\\
= & \sum_{\mathbf{t}_{\ell}}  \sum_{\pi}\sign(\pi) \prod_{c} \mathbf{U}_{\hat{X}_i,\hat{X}_j}(c,\pi(c)) = a_{\ell} \det(\mathbf{U}_{\hat{X}_i ,\hat{X}_j})
\end{align*}

where $a_{\ell}={|T_{\ell}|\choose C} C!$. We finish the proof of Claim~\ref{claim:dmi}. 

\end{proof}

\section{DMI-Mechanism as a Scoring Rule}\label{sec:rule}

\dmi-Mechanism can be naturally transferred into a scoring rule. This scoring rule scores every participant via her payment in the \dmi-Mechanism. This section will show that this \dmi-Mechanism based scoring rule scores information without ground truth, but in expectation gives scores like it has the full knowledge of ground truth. Before the introduction of the formal setting, let's first see the following running example, which will help understand the formal setting. 

\begin{example} [peer grading]
    $n$ students are asked to grade $T$ a priori similar essays and the score space is $\mathcal{C}=\{\text{good, okay, fail}\}$. Each essay $t$ has an unknown ground truth score $g^t\in \mathcal{C}$, which is expensive to access. We assume that for every essay, the students' reported scores are independent conditioning on the essay's ground truth score. We aim to evaluate each student's true grading quality without access to the ground truth scores and each student's true grading quality should reflect the correlation between her reported score and the ground truth score. 
\end{example}

\paragraph{Multi-task Information Evaluation}
We modify the multi-task information elicitation setting by adding an unknown ground truth. For each task $t$, there exists an unknown ground truth signal $g^t\in \mathcal{C}$. Before receiving task $t$, we denote the unknown ground truth for this task by random variable $G^t$. We use $U_{[n],G}^t$ to denote the joint distribution over all agents' private signals and the unknown ground truth for task $t$. The a priori similar assumption and the consistent strategy assumption also extend to this setting and thus we can omit the superscript $t$ and assume i.i.d. samples. 

\begin{assumption}[Conditional independence]
    We assume that agents' private signals $X_1,X_2,\cdots X_n$ are independent conditioning on $G$. Since agents' strategies are independent, this also implies that agents' reports $\hat{X}_1,\hat{X}_2,\cdots \hat{X}_n$ are independent conditioning on $G$. 
\end{assumption}
With the full knowledge of ground truth $G$, we define the quality score of each agent $i$'s reported signals by its DMI based correlation with $G$,

\[\mathrm{Quality}(\hat{X}_i):=\dmi^2[\hat{X}_i;G].\]

$X$ is \emph{informative} if it has a non-zero quality score. Informative information can still be very noisy (e.g. $X=G$ w.p. 0.01, $X=1$ w.p. 0.99). We say an agent is informative if she reports informative information. The following observation follows directly from the information-monotonicity of $\dmi$ and justifies the reasonableness of this quality definition.  

\begin{observation}
    $G$ has the highest quality score and when $X'$ is less informative than $X$, i.e., $X'$ is independent of $Y$ conditioning $X$, $X'$ has a lower quality score than $X$. 
\end{observation}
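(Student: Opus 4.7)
The plan is to reduce both parts of the observation to the information-monotonicity clause of Lemma~\ref{lem:dmi}, applied with $Y = G$. Since $\mathrm{Quality}(\hat{X}_i) = \dmi^2[\hat{X}_i;G]$ is a monotone function of $\dmi[\hat{X}_i;G]$ (both are non-negative), it suffices to establish the analogous statements for $\dmi[\cdot;G]$ rather than its square.

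For the second claim, I would invoke Lemma~\ref{lem:dmi} directly: if $X'$ is independent of $G$ conditioning on $X$, then $\dmi(X';G)\leq \dmi(X;G)$, hence $\mathrm{Quality}(X')\leq \mathrm{Quality}(X)$. No further computation is needed.

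For the first claim (that $G$ maximizes the quality score), the plan is to apply the same information-monotonicity step but with $X:=G$ and $X'$ taken to be any candidate random variable on the support $\mathcal{C}$. The only thing to check is that any $X'$ is trivially independent of $G$ conditioned on $G$, since conditioning on $G$ makes $G$ a constant and hence independent of everything; so Lemma~\ref{lem:dmi} gives $\dmi(X';G)\leq \dmi(G;G)$. As a sanity check one can also see this directly by writing $\mathbf{U}_{X',G} = \mathbf{U}_{X'|G}\,\mathbf{U}_{G,G}$, taking determinants via $\det(\mathbf{AB})=\det(\mathbf{A})\det(\mathbf{B})$, and bounding $|\det(\mathbf{U}_{X'|G})|\leq 1$ using the transition-matrix determinant fact already cited in the proof of Lemma~\ref{lem:dmi}.

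There is no real obstacle here — the observation is essentially a corollary of the properties already established in Lemma~\ref{lem:dmi}, which is presumably why the authors labelled it an observation rather than a lemma. The only cosmetic care I would take is pointing out the benign abuse of notation in the statement (where the $Y$ appearing in the ``conditioning'' phrase plays the role of $G$), and noting that squaring preserves the inequality because $\dmi$ is non-negative by Lemma~\ref{lem:dmi}.
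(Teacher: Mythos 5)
Your proposal is correct and matches the paper's approach: the paper offers no separate proof, stating only that the observation ``follows directly from the information-monotonicity of $\dmi$,'' which is exactly the reduction you carry out (including the correct reading of the misprinted $Y$ as $G$ and the observation that any $X'$ is trivially independent of $G$ conditioned on $G$). Nothing further is needed.
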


However, every information-monotone measure satisfies the above observation. The reason to use DMI here is that \emph{we can measure the DMI based quality score without the knowledge of ground truth} (see the following theorem). This is due to the relative invariance of DMI. The traditional MI does not have this special property. 

\paragraph{DMI-Mechanism based scoring rule} $n\geq 3$ agents are assigned $T\geq 2C$ a priori similar tasks in independent random orders. Agents finish the tasks without any communication. Each agent's score is her payment in DMI-Mechanism. 

\begin{theorem}[Information Evaluation]\label{thm:main2}
    When $T\geq 2C$ and the number of informative agents is $\geq 3$, the difference between every two agent's expected score in the DMI-Mechanism based scoring rule is proportional to the difference between their quality score, i.e., \[\E (p_i-p_j) \propto \mathrm{Quality}(\hat{X}_i)-\mathrm{Quality}(\hat{X}_j)\] 
    
     Moreover, by setting $|T_1|=\lc{\frac{T}{2}}\rc$ and $|T_2|=T-|T_1|$, for every $i$, and every $0<\alpha<\kappa, 0<\delta<1$, when $T\geq g(\alpha,\delta,C,\kappa):=\frac{4 C}{1-(1-\frac{\alpha^2 \delta}{3(C!+1)(n-1)\kappa })^{\frac{1}{C}}}$ where $\kappa=\max_{i\neq j} \per(\mathbf{U}_{\hat{X}_i,\hat{X}_j})$, \[ \Pr[|q_i-\E[q_i]|>\alpha]\leq \delta \] where $q_i=\mathrm{norm}(p_i)$. For every two agents $i,j$, by setting $\alpha=\frac{|\E[q_i-q_j]|}{2}$, when $T\geq g(\alpha,\frac{\delta}{2},C,\kappa)$, the probability that the DMI-Mechanism based scoring rule assigns a higher score to the lower-quality information between the two is bounded by $\delta$, i.e.,   
    \[ \Pr[(p_i-p_j)(\mathrm{Quality}(\hat{X}_i)-\mathrm{Quality}(\hat{X}_j))<0]\leq \delta \]
\end{theorem}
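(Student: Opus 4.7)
The plan is to handle the three conclusions in sequence, each resting on tools already developed in the paper.

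Stage~1, proportionality of expected scores. Conditional independence of $\hat X_i,\hat X_j$ given $G$ yields the factorization $\mathbf{U}_{\hat X_i,\hat X_j}=\mathbf{U}_{\hat X_i\mid G}^\top\mathbf{D}_G\mathbf{U}_{\hat X_j\mid G}$, where $\mathbf{D}_G$ is the diagonal matrix of ground-truth priors. By multiplicativity of the determinant, setting $d_k:=\det(\mathbf{U}_{\hat X_k\mid G})^2$ and $p:=\det(\mathbf{D}_G)^2$, I get $\dmi^2(\hat X_i;\hat X_j)=p\,d_i\,d_j$ and $\mathrm{Quality}(\hat X_i)=p\,d_i$. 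Substituting into $\E[p_i]=a_1a_2\sum_{j\neq i}\dmi^2(\hat X_i;\hat X_j)$ from Claim~\ref{claim:dmi} and using $x(D-x)-y(D-y)=(x-y)(D-x-y)$ with $D=\sum_k d_k$ gives
\[\E[p_i-p_j]=a_1a_2\,p\,(d_i-d_j)\sum_{k\neq i,j}d_k.\]
The proportionality constant is nonnegative and strictly positive exactly when some $d_k$ with $k\notin\{i,j\}$ is nonzero, which is guaranteed by the hypothesis of $\geq 3$ informative agents.

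Stage~2, concentration. I would apply Chebyshev to $q_i=p_i/N$ with $N=(n-1)(C!)^2\binom{|T_1|}{C}\binom{|T_2|}{C}$, reducing to the bound $\var(p_i)\leq\delta(\alpha N)^2$. Writing $p_i=\sum_{j\neq i}Z_{ij}$ with $Z_{ij}=\det(\mathbf{M}_1^{ij})\det(\mathbf{M}_2^{ij})$ and expanding $\var(p_i)=\sum_j\var(Z_{ij})+\sum_{j\neq k}\cov(Z_{ij},Z_{ik})$, the disjoint-task independence of $\mathbf{M}_1^{ij},\mathbf{M}_2^{ij}$ lets me factor $\var(Z_{ij})$ through the second moments of each determinant. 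The key estimate is $\E[\det(\mathbf{M}_\ell^{ij})^2]$: extending the signed-indicator decomposition of Claim~\ref{claim:dmi} to ordered pairs $(\mathbf{t},\mathbf{s})$ of $C$-tuples of tasks, the disjoint-pair contribution equals exactly $\binom{|T_\ell|}{C}\binom{|T_\ell|-C}{C}(C!)^2\det(\mathbf{U}_{\hat X_i,\hat X_j})^2$, while the overlap contribution is bounded in absolute value by the analogous unsigned sum and hence by $\kappa=\max_{i\neq j}\per(\mathbf{U}_{\hat X_i,\hat X_j})$ times a counting factor. The cross-covariances $\cov(Z_{ij},Z_{ik})$ admit a parallel treatment, since agent $i$'s column is the only shared randomness. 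Finally, $\binom{|T_\ell|-C}{C}/\binom{|T_\ell|}{C}=\prod_{k=0}^{C-1}(1-C/(|T_\ell|-k))\geq(1-4C/T)^C$ whenever $|T_\ell|\geq T/2\geq C$, so the sampling deficit is at most $1-(1-4C/T)^C$; the hypothesis $T\geq g(\alpha,\delta,C,\kappa)$ is precisely what forces this deficit below $\alpha^2\delta/[3(C!+1)(n-1)\kappa]$, and tracking the counting factors through yields $\Pr[|q_i-\E[q_i]|>\alpha]\leq\delta$.

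Stage~3, sign preservation. This is a two-point union bound. Invoking Stage~2 at tolerance $\delta/2$ and scale $\alpha=|\E[q_i-q_j]|/2$ for both $q_i$ and $q_j$,
\[\Pr\bigl[(q_i-q_j)\E[q_i-q_j]<0\bigr]\leq\Pr[|q_i-\E[q_i]|>\alpha]+\Pr[|q_j-\E[q_j]|>\alpha]\leq\delta,\]
and by Stage~1 the sign of $\E[q_i-q_j]$ agrees with that of $\mathrm{Quality}(\hat X_i)-\mathrm{Quality}(\hat X_j)$; since $p_i-p_j=N(q_i-q_j)$ with $N>0$, the same sign relation transfers to the unnormalized payments.

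The hard part will be Stage~2's variance bookkeeping: I must enumerate carefully how signed indicator products collide on overlapping $C$-tuples within a single $\mathbf{M}_\ell^{ij}$ and on pairs $(i,j),(i,k)$ sharing agent $i$'s reports, and then exploit the $\sign(\pi)\sign(\sigma)$ cancellations tightly enough that the residual permanent-type overlap yields the precise $(C!+1)(n-1)\kappa$ factor appearing inside $g$. Stages~1 and~3 reduce to algebra and standard probabilistic inequalities.
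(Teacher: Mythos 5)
Your proposal is correct, and for Stages~1 and~3, as well as for the single-pair second-moment estimate that is the heart of Stage~2, it coincides with the paper's argument: the paper likewise derives $\E(p_i-p_j)=(\mathrm{Quality}(\hat X_i)-\mathrm{Quality}(\hat X_j))\sum_{k\neq i,j}a_1a_2\det(\mathbf{U}_{\hat X_k|G})^2$ (via the named relative-invariance property of \dmi{} rather than your explicit factorization through $\mathbf{D}_G$ --- the same computation), bounds $\var[\det(\mathbf{M}_\ell^{ij})/f(|T_\ell|,C)]\le g(|T_\ell|,C)\,\per(\mathbf{U})$ by exactly your split of ordered pairs of $C$-tuples into disjoint pairs (which reproduce $\det(\mathbf{U})^2$) and overlapping pairs (bounded by a permanent times a counting factor), and closes with the same independent-product variance identity, Chebyshev, and the two-point union bound at $\alpha=|\E[q_i-q_j]|/2$. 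The one place you genuinely diverge is the aggregation over peers: you propose to bound $\var(p_i)=\sum_j\var(Z_{ij})+\sum_{j\neq k}\cov(Z_{ij},Z_{ik})$ directly, which forces you to control the cross-covariances induced by agent $i$'s shared reports, whereas the paper never forms $\var(p_i)$ at all --- it writes $q_i=\frac{1}{n-1}\sum_{j\neq i}q^{ij}$ and applies a union bound over the $n-1$ peers, so only single-pair concentration is ever needed; this is precisely where the factor $(n-1)$ inside $g(\alpha,\delta,C,\kappa)$ originates. Your route is viable and even slightly stronger: Cauchy--Schwarz gives $\var(q_i)\le\bigl(\frac{1}{n-1}\sum_{j\neq i}\sqrt{\var(q^{ij})}\bigr)^2\le\max_j\var(q^{ij})$, which renders the $(n-1)$ unnecessary and certainly implies the stated threshold. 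But you should either invoke Cauchy--Schwarz explicitly or fall back to the paper's union bound, rather than promising a ``parallel treatment'' of the four-tuple overlap combinatorics for $\cov(Z_{ij},Z_{ik})$ --- that enumeration is the only unfinished bookkeeping in your plan, and it is avoidable.
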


In the peer grading example, the above theorem shows that the student who has better grading quality will also be paid more in expectation in \dmi-mechanism. This property also shows the \emph{robustness} of \dmi-mechanism. Note that when the number of informative agents is $\geq 3$, every agent's rank is only related to the true quality of her information. Thus, to compare every two agents' information quality, the quality of other agents' information is not important, as long as the number of tasks is sufficiently large. Thus, we can know which student has a better grading capacity by just looking at her payment in \dmi-mechanism, when there are sufficient amount of tasks. 

This section will focus on the proof of the first part of the above theorem. This part's proof is natural and all results almost follow directly from the relative invariance of \dmi. The next section presents the full concentration bound analysis. Note that Claim~\ref{claim:dmi} already proved that the payment of each agent can be written as a sum of several indicator variables. Then the analysis will follow from a classic second moment method. The next paragraph gives some explanations for the task complexity. 

In addition to the error probability $\delta$, the task complexity depends on the two students' actual grading quality difference $|\E[q_i-q_j]|$, the prior information $\kappa$, the score space size $C$ and the number of students $n$. $\kappa$ is the maximal permanent over all joint distribution matrices. When $\kappa$ is small, the payment is more concentrated in an absolute way. When $\kappa$ is zero, all agents' payments must always be zero and there is no need for the concentration bound. Since there exist informative agents, $\kappa$ must be $>0$. The task complexity depends on $n$ since the proof uses a union bound for all constitutions of an agent's payment. When all agents are informative, we can always play the game within 3 agents and make $n=3$ to decrease the task complexity.   

\begin{proof} [Proof of Theorem \ref{thm:main2} Part 1]
    For every two agent $i,j$, the difference between their expected payments are 
    \begin{align*}
       & \E(p_i - p_{j}) \\
       = & \sum_{k\neq i} a_1 a_2 \dmi^2(\hat{X}_i;\hat{X}_k)-\sum_{k\neq j} a_1 a_2 \dmi^2(\hat{X}_{j};\hat{X}_k)\\ 
                         = & \sum_{k\neq i,j} a_1 a_2 \dmi^2(\hat{X}_i;\hat{X}_k)-\sum_{k\neq i, j} a_1 a_2 \dmi^2(\hat{X}_{j};\hat{X}_k)\\ \tag{DMI is symmetric} 
        =&\sum_{k\neq i,j} a_1 a_2 \left(\dmi^2(\hat{X}_i;G)-  \dmi^2(\hat{X}_{j};G)\right)\det(\mathbf{U}_{\hat{X}_k|G})^2\\ \tag{DMI is relatively invariant} 
        =& (\mathrm{Quality}(\hat{X}_i)-\mathrm{Quality}(\hat{X}_{j})) (\sum_{k\neq i,j} a_1 a_2 \det(\mathbf{U}_{\hat{X}_k|G})^2)
    \end{align*}
    
    When at least three informative agents exist, there must exist $k\neq i,j$ such that $\mathrm{Quality}(\hat{X}_k)\neq 0$, which also implies $\det(\mathbf{U}_{\hat{X}_k|G})\neq 0$ since $\mathrm{Quality}(\hat{X}_k)=(\det(\mathbf{U}_{G,G})^2\det(\mathbf{U}_{\hat{X}_k|G}))^2$ is non-zero. Thus, $(\sum_{k\neq i,j} a_1 a_2 \det(\mathbf{U}_{\hat{X}_k|G})^2)\neq 0$, which implies that \[\E (p_i-p_j) \propto \mathrm{Quality}(\hat{X}_i)-\mathrm{Quality}(\hat{X}_j).\] 
    
\end{proof}

\subsection{Concentration Bound Analysis} \label{sec:cen}

\begin{proof} [Proof of Theorem \ref{thm:main2} Part 2]

We first use union bound to reduce the proof to the analysis of the concentration bound of $q^{ij}=\mathrm{norm}(\det (\mathbf{M}_1^{ij}) \det (\mathbf{M}_2^{ij})):=\frac{\det (\mathbf{M}_1^{ij}) \det (\mathbf{M}_2^{ij})}{(C!)^2 {|T_1|\choose C}{|T_2|\choose C}} $ and then analyze $q^{ij}$ with the second moment method. 

\paragraph{Union bound reduction}
    
First note that for every $i$, since $q_i=\frac{\sum_{j\neq i}q^{ij}}{n-1}$
\begin{align*}
    \Pr[|q_i-\E[q_i]|>\alpha]\leq \Pr[\exists j, |q^{ij}-\E[q^{ij}]|>\alpha]<(n-1)\max_j \Pr[|q^{ij}-\E[q^{ij}]|>\alpha].
\end{align*} Similarly, for every two agents $i, j$,  

\begin{align*}
    & \Pr[(p_i-p_j)(\mathrm{Quality}(\hat{X}_i)-\mathrm{Quality}(\hat{X}_j))<0]\\
    \leq &\Pr[|q_i-\E[q_i]|>\frac{|\E[q_i-q_j]|}{2} \text{ or } |q_j-\E[q_j]|>\frac{|\E[q_i-q_j]|}{2}]\\
    \leq & 2 \max\{ \Pr[|q_i-\E[q_i]|>\frac{|\E[q_i-q_j]|}{2}, \Pr[|q_j-\E[q_j]|>\frac{|\E[q_i-q_j]|}{2} \}
\end{align*}

Once we prove that when $T\geq h(\alpha, \delta, C, \mathbf{U}_{\hat{X}_i,\hat{X}_j})$, $\Pr[|q^{ij}-\E[q^{ij}]|>\alpha]<\delta$, we will know that when $T\geq \max_{i,j} h(\alpha, \frac{\delta}{n-1}, C, \mathbf{U}_{\hat{X}_i,\hat{X}_j})$, $\Pr[|q_i-\E[q_i]|>\alpha]<\delta$ and when $T\geq \max_{i,j} h(\frac{|\E[q_i-q_j]|}{2}, \frac{\delta}{2(n-1)}, C, \mathbf{U}_{\hat{X}_i,\hat{X}_j})$, $Pr[(p_i-p_j)(\mathrm{Quality}(\hat{X}_i)-\mathrm{Quality}(\hat{X}_j))<0]<\delta$. Thus, it's sufficient to only analyze the concentration bound of $\max_j \Pr[|q^{ij}-\E[q^{ij}]|>\alpha]$. 

\paragraph{Writing $\det (\mathbf{M}_{\ell}^{ij})$ as a sum of indicator variables}

We fix two agents $i,j$ here. \emph{For simplicity, we omit the superscript $ij$ and will add them in the end}. That is, we use $\mathbf{U}$ to denote the joint distribution over their reports and $\mathbf{M}_1,\mathbf{M}_2$ to denote their answer matrices in DMI-mechanism. $\mathbf{t}_{\ell}=(\mathbf{t}_{\ell}(1),\mathbf{t}_{\ell}(2),\cdots,\mathbf{t}_{\ell}(C))$ denotes ordered $C$ distinct tasks in $T_{\ell}$ and $I_{\mathbf{t}_{\ell}}(\pi):=\Pi_c I_{\mathbf{t}_{\ell}(c)}(c,\pi(c))$. $b_{\pi}$ denotes the expectation of $I_{\mathbf{t}_{\ell}}(\pi)$ which is $\Pi_c \mathbf{U}(c,\pi(c))$, $f(T,C)$ denotes ${T\choose C} C!$ and $\per(\mathbf{U})=\sum_{\pi} b_{\pi}$. 

The proof in Claim~\ref{claim:dmi} shows that 
\[\det \mathbf{M}_{\ell} = \sum_{\pi} \sign(\pi)\sum_{\mathbf{t}_{\ell}\in T_{\ell}^C} I_{\mathbf{t}_{\ell}}(\pi)
\] In the next part, we will analyze $\det (\mathbf{M}_{\ell})$ via the second moment method in a straightforward manner. 

\paragraph{Analyzing $\det (\mathbf{M}_{\ell})$ via the second moment method}

Without loss of generality, we only analyze $\det (\mathbf{M}_1)$ here. 

\begin{align*}
    \var[\det \mathbf{M}_1]=&\var[\sum_{\pi}\sign(\pi)\sum_{\mathbf{t}_1} I_{\mathbf{t}_1}(\pi)]\\
    = & \sum_{\pi} \sum_{\mathbf{t}_1} \var[ I_{\mathbf{t}_1}(\pi)] + \sum_{\pi,\pi'} \sum_{\mathbf{t}_1\cap\mathbf{t}'_1\neq \emptyset} \cov(I_{\mathbf{t}_1}(\pi),I_{\mathbf{t}'_1}(\pi'))\\
    \leq & \sum_{\pi} \sum_{\mathbf{t}_1} b_{\pi} + \sum_{\pi,\pi'}\sum_{\mathbf{t}_1\cap \mathbf{t}'_1\neq \emptyset } b_{\pi} \\
    = & \sum_{\mathbf{t}_1} (\sum_{\pi} b_{\pi}) + \sum_{\pi', \mathbf{t}_1\cap \mathbf{t}'_1\neq \emptyset  } (\sum_{\pi} b_{\pi})\\
    = & f(|T_1|,C)\per(\mathbf{U}) + C! f(|T_1|,C) (f(|T_1|,C)-f(|T_1|-C,C))\per(\mathbf{U})
\end{align*}

Which induces that 
\[\var[\frac{\det \mathbf{M}_1}{f(|T_1|,C)}]\leq g(|T_1|,C) \per(\mathbf{U})\]

where 

\begin{align*} 
    g(|T_1|,C)=&\frac{1}{f(|T_1|,C)}+C!(1-\frac{f(|T_1|-C,C)}{f(|T_1|,C)})\\
    < & (C!+1) (1-\frac{f(|T_1|-C,C)}{f(|T_1|,C)})\\
    < &  (C!+1)(1-(1-\frac{2 C}{|T_1|})^C) = o(|T_1|)
\end{align*}

For \emph{independent} $X$ and $Y$, 

\begin{align*}
    \var[X Y] & = \E[X^2 Y^2] - (\E[X Y])^2\\
              & = ((\E[X])^2+\var[X])((\E[Y])^2+\var[Y])- (\E[X Y])^2\\
    &= \var[X]\var[Y]+ \var[X](\E[Y])^2 + \var[Y](\E[X])^2
\end{align*}

then

\begin{align*}
    \var[\frac{\det \mathbf{M}_1 \mathbf{M}_2}{f(|T_1|,C) f(|T_2|,C)}]&= \var[\det \mathbf{M}_1]\var[\det \mathbf{M}_1]+ \var[\det \mathbf{M}_1](\det(\mathbf{U}))^2 + \var[\det \mathbf{M}_2](\det(\mathbf{U}))^2\\
     &\leq g(|T_1|,C)g(|T_2|,C) (\per(\mathbf{U}))^2 + (g(|T_1|,C)+g(|T_2|,C))\per(\mathbf{U}) (\det(\mathbf{U}))^2
\end{align*}

Based on Chebyshev's inequality, 

\begin{align*}
    & \Pr[|\frac{\det \mathbf{M}_1 \mathbf{M}_2}{f(|T_1|,C) f(|T_2|,C)}-(\det(\mathbf{U}))^2|\geq \alpha]\\
    \leq &\frac{g(|T_1|,C)g(|T_2|,C) (\per(\mathbf{U}))^2 + (g(|T_1|,C)+g(|T_2|,C))\per(\mathbf{U}) (\det(\mathbf{U}))^2}{\alpha^2}
\end{align*}

Then when $g(|T_{\ell}|,C)<\frac{ \alpha^2\delta}{3\per(\mathbf{U})} $, $\Pr[|\frac{\det \mathbf{M}_1 \mathbf{M}_2}{f(|T_1|,C) f(|T_2|,C)}-(\det(\mathbf{U}))^2|\geq \alpha]< \delta $

Note that $g(|T_{\ell}|,C)<(C!+1)(1-(1-\frac{2 C}{|T_1|})^C)$, then we can pick $|T_l|>\frac{2C}{1-(1-\frac{ \alpha^2\delta}{3\per(\mathbf{U}) (C!+1)} )^{1/C}}$ to make $g(|T_{\ell}|,C)<\frac{ \alpha^2\delta}{3\per(\mathbf{U})} $. Thus, we can pick $T>h(\alpha, \delta, C, \mathbf{U}_{\hat{X}_i,\hat{X}_j}):= \frac{4C}{1-(1-\frac{ \alpha^2\delta}{3\per(\mathbf{U}_{\hat{X}_i,\hat{X}_j}) (C!+1)} )^{1/C}} $ (we put back the superscript $ij$) to make $\Pr[|q^{ij}-\E[q^{ij}]|>\alpha]<\delta$. To maximize over all $i,j$, we just pick $\kappa=\max_{i\neq j} \per(\mathbf{U}_{\hat{X}_i,\hat{X}_j})$. Based on the analysis in the union bound reduction part, we finish the proof. 

\end{proof}

\section{Conclusion and Discussion}\label{sec:con}
This paper proposes the first dominantly truthful and the first detail-free informed-truthful multi-task peer prediction mechanism, \dmi-Mechanism, that only requires a finite number of tasks. In fact, \dmi-Mechanism only requires a small constant number of tasks. Additionally, the mechanism is detail-free, minimal, works for non-common heterogenous priors and only requires $\geq 2$ participants. When there are $\geq 3$ informative participants, \dmi-Mechanism can also be used as a scoring rule that scores information without knowledge of ground truth. The construction of the mechanism is based on a new information measure, $\dmi$.   

For clean analysis, the information elicitation part's analysis is restricted to the setting where agents do not need to expend effort to receive their private information. In the setting where agents incur a cost for putting forth the effort required to obtain their private signal, \dmi-Mechanism still applies straightforwardly and all truthfulness properties are preserved when the analysis is restricted to the signal-report strategies. However, to incentivize efforts, each agent needs to believe at least a certain number of other agents will invest effort.

The current implementation of the mechanism asks the agents to perform the same $T$ tasks, while it's fine to ask the agents to perform different tasks as long as each agent has $\geq 2C$ overlapping tasks with other agents. With this extension, each task does not need to be performed by all agents and thus the efficiency of the mechanism will increase. 

An interesting direction is to perform real-world experiments and use \dmi-Mechanism to recover unknown ground truth. A full classification of dominantly truthful mechanisms with a small number of tasks is another interesting theoretic direction. This problem can be reduced to the search for information measures that have the polynomial format. Moreover, \dmi's weakness is that it cannot measure the correlation between random variables that have supports of different sizes. The investigation of the existence of the mechanism that can overcome this weakness and preserve all other properties is also a possible direction.  

\section*{Acknowledgements} The author would like to thank all anonymous reviewers for their careful reviews and helpful suggestions.

\bibliographystyle{plainnat}
\bibliography{reference}

\appendix

\section{Additional proof}\label{appendix:a}

\begin{fact}\label{fact:dmi}
    For every square transition matrix $\mathbf{T}$, $|\det(\mathbf{T})|\leq 1$ and the equality holds if and only if $\mathbf{T}$ is a permutation matrix. 
\end{fact}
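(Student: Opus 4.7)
The plan is to prove this via Hadamard's inequality combined with the basic fact that the $\ell_2$ norm is dominated by the $\ell_1$ norm on nonnegative vectors. Write $\mathbf{T}$ in terms of its rows $t_1,\ldots,t_C$, each of which is a probability vector (nonnegative entries summing to $1$). Hadamard's inequality gives $|\det(\mathbf{T})|\leq \prod_{i=1}^{C}\|t_i\|_2$. Since each $t_i$ has nonnegative entries summing to $1$, the inequality $\|t_i\|_2^2=\sum_c t_i(c)^2\leq \sum_c t_i(c)=1$ holds, so $\|t_i\|_2\leq 1$. Chaining these bounds immediately yields $|\det(\mathbf{T})|\leq 1$.

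For the equality direction, the plan is to peel off the two places where slack could have been lost. First, $\|t_i\|_2=1$ together with $t_i\in \Delta_{\mathcal{C}}$ forces $t_i$ to be a standard basis vector: the comparison $\sum_c t_i(c)^2 \leq \sum_c t_i(c)$ is strict unless every entry lies in $\{0,1\}$, and combined with the row summing to $1$ this picks out exactly one coordinate equal to $1$. Hence every row of $\mathbf{T}$ is a standard basis vector, so $\mathbf{T}$ is a $0$-$1$ matrix with a single $1$ per row. Second, equality in Hadamard's inequality requires the rows to be pairwise orthogonal; distinct standard basis vectors are orthogonal iff they pick distinct coordinates, so each column also has exactly one $1$. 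Therefore $\mathbf{T}$ is a permutation matrix. The converse is immediate since any permutation matrix has determinant $\pm 1$.

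The main obstacle is not computational — both Hadamard's inequality and the $\ell_2$-vs-$\ell_1$ comparison are elementary — but rather making sure the equality analysis is watertight, since equality in Hadamard needs \emph{both} per-row norm saturation \emph{and} mutual orthogonality, and dropping either condition breaks the characterization. A natural alternative would be to invoke the Gershgorin circle theorem, noting that every eigenvalue of a row-stochastic matrix lies in the closed unit disk and hence $|\det(\mathbf{T})|=\prod_i |\lambda_i|\leq 1$; however, the equality case then requires a more delicate argument about stochastic matrices whose entire spectrum lies on the unit circle, so the Hadamard route is the cleaner one to pursue.
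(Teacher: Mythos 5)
Your proof is correct, and it takes a genuinely different route from the paper's. The paper proves the fact by induction on the dimension $C$: it generalizes to \emph{sub}-row-stochastic matrices (rows with nonnegative entries summing to at most $1$), expands the determinant along the first row, bounds $|\det(\mathbf{A})|\leq \sum_j a_{1,j}|\det(\mathbf{A}_{-1,-j})|$ using the inductive hypothesis on the minors, and then runs a somewhat fiddly case analysis to extract the permutation structure in the equality case. Your argument instead combines Hadamard's inequality with the observation that $\|t\|_2\leq 1$ for any probability vector $t$, and the equality analysis decomposes cleanly into the two saturation conditions: per-row norm saturation forces each row to be a standard basis vector, and mutual orthogonality forces those basis vectors to be distinct. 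Both arguments are watertight. What the paper's induction buys is self-containedness (no external inequality is invoked) and a marginally stronger statement covering sub-stochastic matrices; what your route buys is brevity and a more transparent equality case, at the cost of citing Hadamard's inequality together with its equality condition (rows pairwise orthogonal or some row zero --- and you correctly rule out the zero-row case since all norms equal $1$). Your remark that the Gershgorin/spectral route would make the equality case harder is also accurate: characterizing stochastic matrices with all eigenvalues on the unit circle requires Perron--Frobenius-type machinery, which is exactly the citation the paper offers as the non-elementary alternative.
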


The above fact can be implied by Perron-Frobenius theorem \cite{seneta2006non} and here we present a direct and basic proof. 

\begin{proof}
    Since a matrix's determinant equals its transpose's determinant, without loss of generality, we assume $\mathbf{T}$ is row-stochastic, i.e., every row sums to 1. A matrix $\mathbf{A}$ is sub-row-stochastic if its all elements are in $[0,1]$ and every row's sum is less than 1. We will show a slightly stronger result: for every square sub-row-stochastic matrix $\mathbf{A}$, $|\det(\mathbf{A})|\leq 1$ and the equality holds iff $\mathbf{A}$ is a permutation matrix. This result implies the fact and is easy to be proved by induction with respect to the matrix dimension $C$. 
    
    When $C=1$, the result holds naturally. For $C>1$, when the result holds for $C-1$, first note that
    \begin{align*}
        |\det(\mathbf{A})|= & |\sum_j (-1)^{j+1} a_{1,j}\det(\mathbf{A}_{-1,-j})|\\
        \leq & \sum_j a_{1,j}|\det(\mathbf{A}_{-1,-j})|
    \end{align*}
    
    where $a_{1,j}$ is the $j^{th}$ element of $\mathbf{A}$'s first row and $\mathbf{A}_{-1,-j}$ is $\mathbf{A}$'s minor that deletes the first row and the $j^{th}$ column. 

Since every minor is also sub-row-stochastic and the result holds for $C-1$, we have
\begin{align*}
        |\det(\mathbf{A})|\leq \sum_j a_{1,j}|\det(\mathbf{A}_{-1,-j})|\leq \sum_j a_{1,j}\leq 1
    \end{align*}
    
When $|\det(\mathbf{A})|=1$, $\sum_i a_{1,j}=1$ and for all $j'$ that $a_{1,j'}>0$, $|\det(\mathbf{A}_{-1,-j'})|=1$, which implies that $\mathbf{A}_{-1,-j'}$ is a permutation matrix (recall that the result holds for $C-1$). We pick $j_0$ such that $a_{1,j_0}>0$. Since $\mathbf{A}_{-1,-{j_0}}$ is a permutation matrix, due to the fact that $\mathbf{A}$ is sub-row-stochastic, for all $i\neq 1$, $a_{i,j_0}$ must be zero. Then for all $j\neq j_0$, $\det(\mathbf{A}_{-1,-{j}})$ must be zero since it has all zero column. Thus $a_{1,j_0}$ must be one to make $|\det(\mathbf{A})|=1$. The fact that $\mathbf{A}$ is sub-row-stochastic implies that $a_{1,j}=0, \forall j\neq j_0$. Therefore, $\mathbf{A}$ must be a permutation matrix to make $|\det(\mathbf{A})|=1$.      
\end{proof}

\end{document}